\def\II{\mathbb I}
\def\U#1{{\rm #1}}
\newtheorem{lemma}{{\bf Lemma}}
\def\Pr{\U{Pr}}
\def\tr{\U{tr}}
\def\({\left(}
\def\){\right)}
\begin{document}
\title{Security loophole in error verification in quantum key distribution}

\author{Toyohiro Tsurumaru
\footnote{Tsurumaru.Toyohiro@da.MitsubishiElectric.co.jp}}
\affiliation{Mitsubishi Electric Corporation, Information Technology R\&D Center, 5-1-1 Ofuna, Kamakura-shi, Kanagawa 247-8501, Japan
}

\author{Akihiro Mizutani
\footnote{mizutani@eng.u-toyama.ac.jp}
}
\affiliation{Faculty of Engineering, University of Toyama, Gofuku 3190, Toyama 930-8555, Japan}

\author{Toshihiko Sasaki
\footnote{Toshihiko.Sasaki@quantinuum.com}}
\affiliation{Quantinuum K.K. Otemachi Financial City Grand Cube 3F, Global Business Hub Tokyo 1-9-2 Otemachi, Chiyoda-ku, Tokyo 100-0004 Japan}

\begin{abstract}
  {
  The security of quantum key distribution (QKD) is evaluated based on the secrecy of Alice's key and the correctness of the keys held by Alice and Bob. A practical method for ensuring correctness is known as error verification, in which Alice and Bob reveal a portion of their reconciled keys and check whether the revealed information matches. 
   In this paper, we point out that when error verification is performed in a QKD protocol, the definition of secrecy must be revised accordingly. We illustrate the necessity of this revision with a counterexample, showing that neglecting it can lead to an incorrect security claim. In particular, we observe that in the case of security proof method based on phase error correction, which is one of the mainstream approaches and also known as Koashi's approach, no explicit method has been established to properly incorporate the revised secrecy definition.
To resolve this issue, we present a way to translate the phase error correction-based approach into another mainstream approach, called the leftover hashing lemma-based approach, also known as Renner's approach, where a solution has already been formulated. As a consequence, security proofs under the phase error correction-based approach automatically remain valid without any change in the secret key length, even if they implicitly consider error verification without revising the secrecy definition.
}
\end{abstract}

\maketitle

\section{Introduction}
The standard goal of security proofs of quantum key distribution (QKD)~\cite{Lo2014,RevModPhys.92.025002,Pirandola:20} is to derive the security parameter defined based on the universal composable security framework~\cite{RevModPhys.94.025008,benor2004,Muller-Quade_2009}. The security parameter is, roughly speaking, defined as the trace distance between the ideal secret keys and the actual keys (see Sec.~\ref{sec:usual_separation} for its definition). Toward this goal, it is customary and convenient first to split the security parameter into the secrecy and the correctness parameters, and then to derive each of them separately~\cite{Koashi_2009,Hayashi_2012,Tomamichel_2012,Tomamichel2017largelyself}.
One of them, the correctness parameter is defined by the probability that Alice's and Bob's secret keys are not identical.
The prevalent method for deriving this parameter is called {\it error verification} (see, for example, Ref.~\cite{tupkary2025} for details),
wherein Alice and Bob publicly compare hash values of their reconciled keys (i.e., the keys  obtained after completing bit error correction) to check the identicalness of these keys.
While other methods may in principle be able to serve for the same purpose~\cite{footnote1}, error verification is widely used~\cite{renner2006securityquantumkeydistribution,Tomamichel_2012,Curty2014,PhysRevA.89.022307,Tomamichel2017largelyself,Maeda2019,Sandfuchs2025securityof,mizutani2025protocolleveldescriptionselfcontainedsecurity,kamin2025} because it is by far the simplest and most reliable method in practice.
In this respect, it is an essential part of practical QKD implementations.

When considering the use of the secret key generated by QKD in subsequent cryptographic applications, we point out that the outcome of error verification, denoted by $V$, must be publicly announced (see Sec.~\ref{sec:Verifications_outcome_must_be_announced} for details).
Once this public nature of $V$ is accepted, one can readily conclude that the secrecy parameter must be defined for the state {\it after} error verification 
(see Sec.~\ref{sec:Separation_lemma_with_verification} for details).

However, security proofs based on the phase-error-correction approach~\cite{koashi2005simple,Koashi_2009}, also known as Koashi's approach, appear to inappropriately treat $V$ as secret information and instead define secrecy for the state {\it without} error verification~\cite{Takesue2015, PhysRevResearch.5.023132,PhysRevResearch.5.023065,Mizutani2015,Tamaki_2018,Curras2021,Curras2025,rrdps2014}. This is the core of the problem concerning the treatment of the outcome of error verification, which we identify as the central problem of this paper and refer to as the {\it verification problem}.

As one of the main contributions of this paper, we demonstrate the serious consequences of this inappropriate definition of secrecy by presenting a counterexample. In this example, a false claim of security can be made under this definition for a state without error verification, even though the claim does not hold in reality (see Sec.~\ref{subsec:conterexample} for detail). This situation occurs because in a certain type of protocol, the one bit of information $V$ may become correlated with the secret key, thereby compromising the security.

It should be noted that Koashi's approach (as summarized in Sec.~\ref{sec:summarykoashidifficulty}) has been the mainstream method for security proofs (see, for example, Sec.~II B 3 of the review paper~\cite{RevModPhys.92.025002}) and has been applied to a wide variety of QKD protocols, such as round-robin DPS~\cite{rrdps2014,Takesue2015,rrdpsexperiment1,rrdpsexperiment2}, decoy-state BB84~\cite{koashidecoy}, BB84 with an uncharacterized source~\cite{koashipreskill}, six-state protocol~\cite{katotamakisix}, twin field protocol~\cite{Maeda2019}, loss tolerant protocol~\cite{losstoleranttamaki}, DPS protocol~\cite{dpsmizutaninpj} and continuous variable protocols~\cite{matsuuraCV,matsuuraCV2}.
Nevertheless, the verification problem persists and leads to a security flaw~\cite{Takesue2015, PhysRevResearch.5.023132,PhysRevResearch.5.023065,Mizutani2015,Tamaki_2018,Curras2021,Curras2025,rrdps2014}. We also explain in Sec.~\ref{sec:summarykoashidifficulty} why this problem is difficult to resolve within the framework of Koashi's approach. 
Note that even if a protocol does not explicitly describe a verification step, there is no practical way to guarantee correctness other than by employing error verification. Consequently, as long as the security proof relies on Koashi's approach, the verification problem inevitably arises.

Fortunately, in Renner's approach~\cite{renner2006securityquantumkeydistribution}, which is another mainstream method for QKD security proofs, the verification problem has already been solved~\cite{Tomamichel2017largelyself}. We discuss this in Sec.~\ref{sec:Solution_Renner}. Here, Renner's approach is based on the leftover hashing lemma~\cite{renner2006securityquantumkeydistribution} for min-entropy, and we can relate the min-entropy of the quantum state after error verification to that without error verification (specifically, by using Lemma~10 in Ref.~\cite{Tomamichel2017largelyself}). As a result, even when the outcome of error verification is publicly revealed, secrecy can still be guaranteed without shortening the key length by even a single bit.

Another main contribution of this paper, in addition to clarifying the verification problem, is that we provide a simple solution to the problem in Koashi's approach. Specifically, we prove that security proofs based on Koashi's approach can always be repaired without reducing the final key length (see Sec.~\ref{sec:proposedsolution} for details). The basic idea is to translate Koashi's approach into Renner's approach by exploiting their equivalence established in~\cite{Tsurumaru_2020,9606732}, thereby resolving the problem within the framework of Renner's approach.

From a future perspective, the widespread adoption of QKD in society requires the standardization of a comprehensive framework for certifying its security. Our work represents an important contribution in this direction, as it clearly demonstrates the importance of rigorously incorporating error verification into the security proof and provides a practical method to address this challenge when one adopts Koashi's approach for the security proofs.

\section{Conventional argument of the separation}
\label{sec:usual_separation}
We begin by summarizing the notation adopted throughout this paper.

\begin{enumerate}
    \item    
    $[b]$ denotes the projector $\ket{b}\bra{b}$, with $\{\ket{b}\}_b$ being the computational basis. 
    \item 
    For a composite system described by a density operator $\rho_{AB\cdots}$ over multiple systems ($AB\cdots$), the state of a particular system (e.g., $\rho_A$) is defined by taking the partial trace over the remaining systems.
    \item 
    Given a quantum classical (sub normalized) state $\rho_{AB}$ of systems $AB$, $\rho_{A}^{B=b}$ is defined by $\tr_B[\rho_{AB}(\II_A\otimes[b]_B)]$. 
    \item 
    Given a density matrix $\sigma$, its trace norm is defined by~\cite{Nielsen_Chuang_2010}
    \begin{equation}
    \|\sigma\|_1:=\tr\sqrt{\sigma\sigma^\dagger}.
    \end{equation}
\end{enumerate}

In this section, we revisit the conventional argument for decomposing QKD's security parameter into those of secrecy and correctness based on Ref.~\cite{Koashi_2009}. This argument states that if Alice's final key is $\varepsilon_{\rm sec}$-secret and Alice's and Bob's final keys are $\varepsilon_{\rm cor}$-correct, then their pair of final keys as a whole satisfies $\varepsilon_{\rm sec}+\varepsilon_{\rm cor}$-security.

The more precise explanation would be as follows.
In this section, we restrict ourselves to the types of QKD protocols where decisions of continuing or aborting the protocol are made solely based on public information and do not depend on the contents of the sifted, reconciled, or secret keys. This situation typically arises in certain types of the BB84 protocol, where Alice and Bob abort the protocol if the estimated quantum bit error rate (QBER) during the sampling and parameter estimation phases exceeds a predetermined threshold. However, once the key distillation process---including error correction and privacy amplification---has commenced, they never abort the protocol.

In such cases, the security of a QKD protocol is defined as follows.
Let $K_A, K_B$ be the states of Alice's and Bob's secret keys and $E$ Eve's quantum system.
Also, let $\rho_{K_AK_BE}$ be the marginal (thus possibly sub-normalized) state corresponding to the event where the protocol is continued.
Then we say that the QKD protocol is $\varepsilon$-secure if
\begin{align}
\frac12
    \left\|
    \rho_{K_AK_BE}-\rho_{K_AK_BE}^{\rm ideal}
    \right\|_1
    \le\varepsilon    \label{eq:def_security_without_veri}
\end{align}
is satisfied, with the ideal state being
\begin{align}
\rho_{K_AK_BE}^{\rm ideal}=\sum_{k\in\{0,1\}^\ell}2^{-\ell}[k]_{K_A}\otimes[k]_{K_B}\otimes\rho_E,
\end{align}
and $\ell$ the length of the secret key. To prove Eq.~(\ref{eq:def_security_without_veri}), it is common to decompose the trace distance into two parameters (secrecy and correctness) and evaluate each separately. Specifically, the $\varepsilon_{\rm sec}$-secrecy of Alice's secret key $K_A$ is defined by 
\begin{equation}
d(\rho_{K_AE}|E)\le\varepsilon_{\rm sec},
\label{eq:secrecy_without_verification}
\end{equation}
where
\begin{eqnarray}
d(\rho_{K_AE}|E)
&:=&\frac12
\left\|
\rho_{K_AE}-\rho_{K_AE}^{\rm ideal}
\right\|_1,\\
\rho_{K_AE}^{\rm ideal}&=&2^{-\ell}\II_{K_A}\otimes\rho_E.
\end{eqnarray}
Furthermore, the protocol satisfies $\varepsilon_{\rm cor}$-correctness if the probability that Alice's and Bob's secret keys do not match is upper-bounded by $\varepsilon_{\rm cor}$, i.e., 
\begin{equation}
    \Pr[K_A\ne K_B]\le\varepsilon_{\rm cor}.
    \label{eq:correctness_without_verification}
\end{equation}
Under these conditions, the following lemma~\cite{Koashi_2009} holds.
\begin{lemma}
\label{lmm:separation_without_abortion}
(Separation lemma without error verification)
For QKD protocols without error verification, the trace distance is bounded as
\begin{equation}
\frac12
    \left\|
    \rho_{K_AK_BE}-\rho_{K_AK_BE}^{\rm ideal}
    \right\|_1
    \le
    d(\rho_{K_AE}|E)+
    \Pr[K_A\ne K_B].
    \label{eq:lemma1}
\end{equation}
That is, the security parameter $\varepsilon$ can be bounded as $\varepsilon\le\varepsilon_{\rm sec}+\varepsilon_{\rm cor}$.
\end{lemma}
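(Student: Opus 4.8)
\emph{Proof strategy.} The plan is the standard hybrid‑state argument: interpolate a single intermediate state between $\rho_{K_AK_BE}$ and $\rho_{K_AK_BE}^{\rm ideal}$, bound the two resulting trace distances by $\Pr[K_A\ne K_B]$ and by $d(\rho_{K_AE}|E)$ respectively, and glue them with the triangle inequality for $\|\cdot\|_1$. As interpolant I would take the state obtained from $\rho_{K_AK_BE}$ by measuring $K_A$ in the computational basis and overwriting $K_B$ with a fresh copy of the outcome,
\begin{equation}
\tilde\rho_{K_AK_BE}:=\sum_{k\in\{0,1\}^\ell}[k]_{K_A}\otimes[k]_{K_B}\otimes\rho_E^{K_A=k},\qquad \rho_E^{K_A=k}=\tr_{K_A}[\rho_{K_AE}(\II_E\otimes[k]_{K_A})].
\end{equation}
Let $\mathcal C$ denote the CPTP map that dephases $K_A$ in the computational basis and appends a register $K_B$ carrying a copy of $K_A$'s value. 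Two facts about $\tilde\rho$ drive the argument: (i) it has the same $K_AE$‑marginal as the real state, $\tilde\rho_{K_AE}=\rho_{K_AE}$; and (ii) the \emph{same} map $\mathcal C$ sends $\rho_{K_AE}$ to $\tilde\rho_{K_AK_BE}$ and sends the ideal secrecy state $\rho_{K_AE}^{\rm ideal}=2^{-\ell}\II_{K_A}\otimes\rho_E$ to $\rho_{K_AK_BE}^{\rm ideal}$.

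For the first leg I would show $\|\rho_{K_AK_BE}-\tilde\rho_{K_AK_BE}\|_1\le\Pr[K_A\ne K_B]$. Since $\rho_{K_AK_BE}$ is classical on $K_AK_B$, it splits orthogonally as $\rho_{K_AK_BE}=\rho_{=}+\rho_{\ne}$, where $\rho_{=}$ (resp.\ $\rho_{\ne}$) collects the terms with $K_A=K_B$ (resp.\ $K_A\ne K_B$), so that $\tr\rho_{\ne}=\Pr[K_A\ne K_B]$. The measure‑and‑overwrite operation acts as the identity on $\rho_{=}$; hence the difference of the two states is $\rho_{\ne}$ minus its image under that (trace‑preserving, positivity‑preserving) operation, and the triangle inequality bounds $\|\rho_{K_AK_BE}-\tilde\rho_{K_AK_BE}\|_1$ by $\tfrac12\tr\rho_{\ne}+\tfrac12\tr\rho_{\ne}=\Pr[K_A\ne K_B]$, the factors $\tfrac12$ coming from the normalization built into $\|\cdot\|_1$ and from both operators being positive. (Equivalently, writing $\rho_{K_AK_BE}-\tilde\rho_{K_AK_BE}$ as a sum over $k_A\ne k_B$ of terms $p(k_A,k_B)\,[k_A]_{K_A}\otimes([k_B]_{K_B}-[k_A]_{K_B})\otimes\rho_E^{k_A,k_B}$, with $p$ the joint distribution of $(K_A,K_B)$, and bounding each summand by $p(k_A,k_B)$, gives the same estimate.)

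For the second leg, monotonicity (data processing) of the trace distance under the CPTP map $\mathcal C$, together with (i) and (ii), yields
\begin{equation}
\|\tilde\rho_{K_AK_BE}-\rho_{K_AK_BE}^{\rm ideal}\|_1=\|\mathcal C(\rho_{K_AE})-\mathcal C(\rho_{K_AE}^{\rm ideal})\|_1\le\|\rho_{K_AE}-\rho_{K_AE}^{\rm ideal}\|_1=d(\rho_{K_AE}|E).
\end{equation}
Adding the two legs through the triangle inequality gives Eq.~(\ref{eq:lemma1}), and hence $\varepsilon\le\varepsilon_{\rm sec}+\varepsilon_{\rm cor}$ once the hypotheses (\ref{eq:secrecy_without_verification}) and (\ref{eq:correctness_without_verification}) are imposed.

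I expect no step to be genuinely deep; the ``hard part'' is essentially bookkeeping. One must carry the sub‑normalization of $\rho_{K_AK_BE}$ through every inequality (contractivity under CPTP maps and the triangle inequality both survive), confirm that the measure‑and‑copy operation is legitimately completely positive and trace preserving — which is exactly why it is crucial that $K_A$ is a \emph{classical} register, so that copying is allowed — and keep the factor‑$\tfrac12$ convention for $\|\cdot\|_1$ consistent when passing between $\tr|\cdot|$ and trace distances.
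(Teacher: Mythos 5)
Your proof is correct. Note that the paper itself does not prove Lemma~\ref{lmm:separation_without_abortion} --- it is quoted from Ref.~\cite{Koashi_2009} without proof --- and your hybrid argument (interpolating via the ``measure $K_A$ and overwrite $K_B$'' state, bounding one leg by $\Pr[K_A\ne K_B]$ through the orthogonal split into $K_A=K_B$ and $K_A\ne K_B$ blocks, and the other by $d(\rho_{K_AE}|E)$ through data processing under the copy map) is exactly the standard proof of this separation lemma; the sub-normalization and the factor-$\tfrac12$ convention of $\|\cdot\|_1$ are handled consistently throughout.
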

Intuitively, this lemma means that if Alice's key is secret to Eve and matches Bob's key, then both Alice and Bob share a secret key.
We remark that due to Eve's attacks, the bit error rate can be increased at will. Therefore, in practice, it is impossible to ensure that $\Pr[K_A\ne K_B]$ in Eq.~(\ref{eq:lemma1}) is a small value.

\section{Separation lemma for QKD protocols with error verification\label{sec:separation_with_verification}}
In Sec.~\ref{sec:usual_separation}, we restricted ourselves to the case where decisions of continuing or aborting the protocol are made solely based on public information.
In practical QKD protocols, however, this restriction is often violated due to error verification~\cite{footnote_added}.

\subsection{Error verification's outcome must be announced}
\label{sec:Verifications_outcome_must_be_announced}
We first note that, in light of actual operations performed in QKD systems, it is unrealistic to assume that the outcome of error verification --- denoted by $v=0$ or $1$ for continuing or aborting the protocol --- can be kept permanently hidden from Eve. Therefore, it must be assumed that this information $v$ is always publicly available to Eve. This situation can be justified by the fact that the following scenario frequently occurs.
\begin{description}
    \item[Inevitable leakage of error verification's outcome] 
    Suppose, for example, that Alice and Bob execute a QKD protocol, and immediately after its completion, they use the generated secret key for secure communication with the one-time pad. In such a case, Eve can determine that the QKD protocol did not abort by observing a large volume of encrypted messages transmitted over the public channel. This implies that the outcome of error verification $v\in\{0,1\}$ is effectively leaked to Eve.
    \end{description}
In other words, even if Alice and Bob attempt to conceal $v\in\{0,1\}$ through encryption or other means, it is easy to construct scenarios in which the value of $v$ is leaked to Eve. Therefore, it is not reasonable to assume that $v$ remains concealed from Eve indefinitely, and it must instead be treated as publicly known.

\subsection{Separation lemma with error verification}
\label{sec:Separation_lemma_with_verification}

In order to describe variable $V$ properly, we use the following notation.
We treat $V$ as part of the public information accessible to Eve.
As in Sec.~\ref{sec:usual_separation}, we continue to let $\rho_{K_AK_BVE}$ denote the marginal state corresponding to the event where Alice and Bob decided to continue the protocol based solely on the public information.
In addition, we express the event where they decided to continue (or abort) due to error verification by
$\rho_{K_AK_BE}^{V=0}$ (or $\rho_{K_AK_BE}^{V=1}$), which is a marginal state of $\rho_{K_AK_BVE}$.
The final key state $\rho_{K_AK_BVE}$ then takes the form
\begin{eqnarray}
    \rho_{K_AK_BVE}&=&\sum_{v\in\{0,1\}}\rho_{K_AK_BE}^{V=v}\otimes[v]_V,
    \label{eq:decomposition_V}\\
    \rho_{K_AK_BE}^{V=1}&=&[\bot]_{K_A}\otimes[\bot]_{K_B}\otimes\rho_E^{V=1},
    \label{eq:lemma_2_V1_def}
\end{eqnarray}
with the symbol `$\bot$' denoting the situation where no key is generated since the verification failed.

In this notation, our observation of Sec.~\ref{sec:Verifications_outcome_must_be_announced} claims that it is inappropriate to evaluate the security using the left-hand side (LHS) of Eq.~(\ref{eq:def_security_without_veri}), where $V$ is not included as public information accessible to Eve. The security should rather be evaluated by the trace distance
\[
\frac12
\left\|\rho_{K_AK_BVE}-\rho_{K_AK_BVE}^{\rm ideal}\right\|_1, \]
for which the following separation lemma (a variant of Lemma \ref{lmm:separation_without_abortion}) holds.
\begin{lemma}
\label{lmm:separation_with_abortion}
(Separation lemma with or without error verification) 
For QKD protocols in general, with or without error verification, the security parameter can be upper-bounded as
\begin{equation}
\frac12
\left\|
    \rho_{K_AK_BVE}-\rho_{K_AK_BVE}^{\rm ideal}
    \right\|_1
\le
d\left(\rho_{K_AE}^{V=0}|E\right)+\Pr[K_A\ne K_B].
\label{eq:separation_general_case}
\end{equation}
\end{lemma}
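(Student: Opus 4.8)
The plan is to exploit the classicality of $V$ to reduce the statement to the no-verification case, Lemma~\ref{lmm:separation_without_abortion}, applied on the branch $V=0$. First I would fix the ideal state in the only natural way: since the abort branch carries no key, the ideal state must reproduce the real state there, so that
\begin{equation}
\rho_{K_AK_BVE}^{\rm ideal}
=\Big(\sum_{k\in\{0,1\}^\ell}2^{-\ell}[k]_{K_A}\otimes[k]_{K_B}\otimes\rho_E^{V=0}\Big)\otimes[0]_V
+\rho_{K_AK_BE}^{V=1}\otimes[1]_V ,
\end{equation}
with $\rho_E^{V=0}:=\tr_{K_AK_B}\rho_{K_AK_BE}^{V=0}$. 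Because $V$ is a classical register (it merely records the publicly computed result of the hash comparison), both states are block diagonal in $V$, so the trace norm splits,
\begin{equation}
\left\|\rho_{K_AK_BVE}-\rho_{K_AK_BVE}^{\rm ideal}\right\|_1
=\sum_{v\in\{0,1\}}\left\|\rho_{K_AK_BE}^{V=v}-\left(\rho_{K_AK_BE}^{V=v}\right)^{\rm ideal}\right\|_1 ,
\end{equation}
and by Eq.~(\ref{eq:lemma_2_V1_def}) together with the choice above the $v=1$ term is identically zero. Hence it suffices to bound the single remaining term $\left\|\rho_{K_AK_BE}^{V=0}-\left(\rho_{K_AK_BE}^{V=0}\right)^{\rm ideal}\right\|_1$.

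For this term I would run the argument behind Lemma~\ref{lmm:separation_without_abortion} on the sub-normalized operator $\rho_{K_AK_BE}^{V=0}$, which already has exactly the structure of a no-verification final state --- it is the marginal corresponding to ``continue, then pass error verification''. Concretely, introduce the trace-preserving ``key-copy'' channel $\mathcal{E}$ that measures $K_A$ in the computational basis, records the outcome $k$, discards $K_B$, and reprepares $[k]_{K_B}$. This channel leaves the $K_A=K_B$ component of $\rho_{K_AK_BE}^{V=0}$ unchanged and satisfies $\tr_{K_B}\mathcal{E}\big(\rho_{K_AK_BE}^{V=0}\big)=\rho_{K_AE}^{V=0}$. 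By the triangle inequality,
\begin{align}
\left\|\rho_{K_AK_BE}^{V=0}-\left(\rho_{K_AK_BE}^{V=0}\right)^{\rm ideal}\right\|_1
&\le\left\|\rho_{K_AK_BE}^{V=0}-\mathcal{E}\big(\rho_{K_AK_BE}^{V=0}\big)\right\|_1 \notag\\
&\quad+\left\|\mathcal{E}\big(\rho_{K_AK_BE}^{V=0}\big)-\left(\rho_{K_AK_BE}^{V=0}\right)^{\rm ideal}\right\|_1 .
\end{align}
The first term equals $\Pr[K_A\ne K_B,\,V=0]$: the operator $\rho_{K_AK_BE}^{V=0}-\mathcal{E}(\rho_{K_AK_BE}^{V=0})$ is the difference of two positive operators supported on the mutually orthogonal ``$K_A\ne K_B$'' and ``$K_A=K_B$'' subspaces of $K_AK_B$, each of trace $\Pr[K_A\ne K_B,\,V=0]$ (using that $\mathcal{E}$ is trace preserving). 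The second term equals $d(\rho_{K_AE}^{V=0}|E)$, because $\mathcal{E}(\rho_{K_AK_BE}^{V=0})$ and $(\rho_{K_AK_BE}^{V=0})^{\rm ideal}$ are the images of $\rho_{K_AE}^{V=0}$ and $2^{-\ell}\II_{K_A}\otimes\rho_E^{V=0}$, respectively, under the isometry $\ket{k}_{K_A}\mapsto\ket{k}_{K_A}\otimes\ket{k}_{K_B}$ (acting as the identity on $E$), and the trace norm is invariant under isometries.

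Finally, on the abort branch one has $K_A=K_B=\bot$, so $\Pr[K_A\ne K_B,\,V=0]=\Pr[K_A\ne K_B]$, and substituting the two bounds into the triangle inequality yields Eq.~(\ref{eq:separation_general_case}); taking $V$ trivial recovers Lemma~\ref{lmm:separation_without_abortion}, so the statement indeed holds ``with or without'' error verification. The step I expect to require the most care --- the main obstacle --- is the bookkeeping around the ideal state and the sub-normalization: one must pin down $\rho_{K_AK_BVE}^{\rm ideal}$ so that it coincides with the real state on $V=1$ (otherwise the abort block contributes a spurious term of size $\Pr[V=1]$), and one must check that the proof of Lemma~\ref{lmm:separation_without_abortion} nowhere uses $\tr\rho=1$, so that it transfers verbatim to the sub-normalized $V=0$ block. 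The rest is a direct transcription of the conventional separation argument.
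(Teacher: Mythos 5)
Your proposal is correct and follows essentially the same route as the paper: decompose the trace norm over the classical register $V$, observe that the $V=1$ (abort) block is already ideal and contributes nothing, apply the separation argument of Lemma~\ref{lmm:separation_without_abortion} to the sub-normalized $V=0$ block, and use $\Pr[K_A\ne K_B\land V=1]=0$ to drop the conditioning on $V=0$. The only difference is that you unpack the proof of Lemma~\ref{lmm:separation_without_abortion} (key-copy channel plus triangle inequality) to verify it holds for sub-normalized states, whereas the paper simply invokes the lemma on $\rho_{K_AK_BE}^{V=0}$; this is a reasonable extra check rather than a different approach.
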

\begin{proof}
By using Eqs.~(\ref{eq:decomposition_V}) and (\ref{eq:lemma_2_V1_def}), the trace distance with the ideal case can be bounded as
\begin{eqnarray}
\lefteqn{
\frac12
    \left\|\rho_{K_AK_BVE}-\rho_{K_AK_BVE}^{\rm ideal}\right\|_1}
    \nonumber\\
    &=&\frac12\sum_{v\in\{0,1\}}
    \left\|\rho_{K_AK_BE}^{V=v}-\left(\rho_{K_AK_BE}^{V=v}\right)^{\rm ideal}\right\|_1\nonumber\\
    &=&
    \frac12\left\|\rho_{K_AK_BE}^{V=0}-\left(\rho_{K_AK_BE}^{V=0}\right)^{\rm ideal}\right\|_1\nonumber\\
    &\le&  d\left(\rho_{K_AE}^{V=0}|E\right)+\Pr[K_A\ne K_B\land V=0]\nonumber\\
    &=&  d\left(\rho_{K_AE}^{V=0}|E\right)+\Pr[K_A\ne K_B].
    \label{eq:lemma_2_proof}
\end{eqnarray}
The first equality holds since the random variable $V$ is public. The second equality follows by the fact that $\rho_{K_AK_BE}^{V=1}$ is ideal, namely, $\rho_{K_AK_BE}^{V=1}=(\rho_{K_AK_BE}^{V=1})^{\rm ideal}$ (because no information is leaked to Eve when no key is generated), which can be seen from Eq.~(\ref{eq:lemma_2_V1_def}).
The inequality follows by applying Lemma~\ref{lmm:separation_without_abortion} to $\rho_{K_AK_BE}^{V=0}$.
The last equality holds since $\Pr[K_A\ne K_B\land V=1]=0$ due to Eq.~(\ref{eq:lemma_2_V1_def}).

Note that there is a practical method to upper-bound the second term $\Pr[K_A\ne K_B]$; see Appendix~\ref{sec:appendix1} for the detail.
\end{proof}

Comparing Lemmas~\ref{lmm:separation_without_abortion} and \ref{lmm:separation_with_abortion}, we observe that the quantity used to evaluate secrecy is replaced from $d(\rho_{K_AE}|E)$ to $d(\rho_{K_AE}^{V=0}|E)$. In other words, if we prove the security of QKD protocols with error verification, secrecy must be evaluated only with respect to the event conditioned on the success of error verification (i.e., $V=0$). 

\begin{description}
    \item[Secrecy condition with error verification] 
    The $\varepsilon_{\rm sec}$-secrecy, conditioned on the event that the verification succeeds (i.e., $V=0$), is expressed by 
    \begin{equation}
    d\left(\rho_{K_AE}^{V=0}|E\right)\le\varepsilon_{\rm sec}.
    \label{eq:Secrecy_with_verification}
    \end{equation}
\end{description}

Although many existing works based on Koashi's approach consider QKD protocols with error verification, they often adopt the LHS of Eq.~(\ref{eq:lemma1}) as the secrecy criterion~\cite{Takesue2015, PhysRevResearch.5.023132,PhysRevResearch.5.023065,Mizutani2015,Tamaki_2018,Curras2021,Curras2025}, rather than that of Eq.~(\ref{eq:separation_general_case}), which should be used to properly bound the trace distance in the presence of error verification \cite{footnote2}.
This indicates that the adopted definition is, in general, inadequate for QKD protocols with error verification. One might expect that the LHS of Eq.~(\ref{eq:separation_general_case}) can still be upper-bounded by the right-hand side (RHS) of Eq.~(\ref{eq:lemma1}). However, we will show in the next Sec.~\ref{subsec:conterexample} that this is not the case in general. Specifically, we demonstrate that, when error verification is present, there exists a situation in which the LHS of Eq.~(\ref{eq:separation_general_case}) cannot be bounded by the RHS of Eq.~(\ref{eq:lemma1}).

\subsection{Counterexample to bounding Eq.~(\ref{eq:separation_general_case}) by Eq.~(\ref{eq:lemma1})}
\label{subsec:conterexample}
In this section, we show by example that the LHS of Eq.~(\ref{eq:separation_general_case}) cannot, in general, be upper-bounded by the RHS of Eq.~(\ref{eq:lemma1}). 

In the following, the outcome of error verification is represented by a variable $V\in\{0,1\}$, which must be assumed known to Eve. 
More precisely, $V$ should be regarded not as a variable of Alice or Bob, but as the one accessible to Eve.

\paragraph{Protocol without error verification}
We assume that the reconciled key consists of two bits, with 
\begin{equation}
\rho_{ABE}=\frac18\sum_{x,y,z\in\{0,1\}}[xy]_A\otimes[zx]_B\otimes[z]_{E}.
\label{eq:counterexample_sifted}
\end{equation}
This corresponds, for example, to a situation in the BB84 protocol where Eve leaves the first qubit sent by Alice intact, performs the intercept-and-resend attack on the second qubit, swaps the two qubits, and then sends them to Bob.  

\begin{description}
\label{step:PA}
    \item[Privacy amplification (PA)] Alice and Bob set the first bit of the reconciled key as the secret keys $k_A,k_B$, namely, $k_A=a_1(=x)$, $k_B=b_1(=z)$.
\end{description}
In this case, the joint state of Alice's secret key and Eve's system is already the ideal state, as 
\begin{equation}
\rho_{K_AE}=\left(\frac12\II_2\right)_{K_A}\otimes\left(\frac12\II_2\right)_{E}
\end{equation}
holds. This means that 0-secrecy ($\varepsilon_{\rm sec}=0$) is satisfied, that is 
\begin{equation}
    d(\rho_{K_AE}|E)=0.
    \label{eq:secrecy_counterexample}
\end{equation}

\paragraph{Protocol with error verification added}
Suppose we add the following step to the above protocol.
\begin{description}
    \item[Error verification] 
    Bob compares his two reconciled key bits. If they match, the protocol proceeds; otherwise, Bob aborts the protocol.
\end{description}
This verification succeeds with probability 1/2, and the resulting (sub-normalized) state satisfies
\begin{align}
\rho_{K_AK_BEV}=&\frac14\left(\sum_{k\in\{0,1\}}[k]_{K_A}\otimes[k]_{K_B}\otimes[k]_{E}\otimes[0]_V\right)
\notag\\
&
+\frac14[\bot]_{K_A}\otimes[\bot]_{K_B}\otimes\II_E\otimes[1]_{V}.
\label{eq:rho_verified}
\end{align}
Clearly, 
\begin{equation}
\Pr\left[K_A\ne K_B\right]=0
    \label{eq:correctness_counterexample}
\end{equation}
holds, and the secret keys satisfy 0-correctness. 

To summarize, Eq.~(\ref{eq:secrecy_counterexample}) shows that $\varepsilon_{\rm sec}=0$, and as stated in Eq.~(\ref{eq:correctness_counterexample}), $\varepsilon_{\rm cor}=0$
also holds. Naively, one might therefore expect that combining these with Lemma~\ref{lmm:separation_without_abortion} would imply 0-security—that is, 
\begin{align}
    &\frac12\left\|\rho_{K_AK_BEV}-\left(\rho_{K_AK_BEV}\right)^{\rm ideal}\right\|_1
\notag\\
&\le
d(\rho_{K_AE}|E)+\Pr\left[K_A\ne K_B\right]=0.
\label{eq:wronginequality}
\end{align}
However, this is incorrect. In fact, a direct calculation shows that 
\begin{align}
\frac12\left\|\rho_{K_AK_BEV}-\left(\rho_{K_AK_BEV}\right)^{\rm ideal}\right\|_1=\frac14,
\label{eq:security_counterexample}
\end{align}
indicating that the actual situation is far from achieving 0-security.

\subsection{Analysis of the counterexample}
This section provides an analysis of the counterexample given in Sec.~\ref{subsec:conterexample}. If we evaluate secrecy using the inappropriate definition [Eq.~(\ref{eq:secrecy_without_verification})]—which should not be used for QKD protocols involving error verification—then, as shown in Eq.~(\ref{eq:secrecy_counterexample}), 0-secrecy appears to hold. However, when secrecy is assessed based on the correct definition [Eq.~(\ref{eq:Secrecy_with_verification})], we have
\begin{eqnarray}
d\left(\rho_{K_AE}^{V=0}|E\right)=\frac14,
\label{eq:secrecy_counterexample_with_verification}
\end{eqnarray}
which indicates that the state is far from satisfying 0-secrecy. We note that substituting Eqs.~(\ref{eq:correctness_counterexample}) and (\ref{eq:secrecy_counterexample_with_verification}) into Lemma~\ref{lmm:separation_with_abortion} yields a result consistent with Eq.~(\ref{eq:security_counterexample}). The fundamental reason for this discrepancy is that Eve gains additional information about Alice's secret key upon learning that the protocol has not been aborted (i.e., $V=0$). A more detailed explanation is given below.
\begin{itemize}
    \item 
    According to Eq.~(\ref{eq:counterexample_sifted}) and the verification procedure, the protocol ensures $k_A=E$ if $V=0$, and $k_A\neq E$ when $V=1$. 
    \item 
    In a protocol without error verification (i.e., where $v$ is not disclosed to Eve and the protocol is not aborted), Eve only has the information averaged over the above correlated ($k_A=E$) and anti-correlated events ($k_A\neq E$). As a result, the variable $k_A$ appears uniformly distributed, and 0-secrecy holds, as shown in Eq.~(\ref{eq:secrecy_counterexample}). 
    \item
    In contrast, for a protocol with error verification, the verification step succeeds with probability $1/2$, and its outcome is disclosed to Eve. In this case, Eq.~(\ref{eq:rho_verified}) implies that Alice's secret key is fully leaked to Eve, and secrecy can no longer be guaranteed. 
\end{itemize}

The counterexample above is a toy example indicative of what might happen in a real QKD protocol without error correction.
It illustrates an important point that the intuitive relation given by Eq.~(\ref{eq:wronginequality}) does not hold in general.

\section{Simple method for bounding secrecy parameter with error verification}
\label{sec:solution}

The counterexample in Sec.~\ref{subsec:conterexample} demonstrates that the variable $V$ can be correlated with the secret key. Consequently, even if secrecy were guaranteed in a situation where the key is generated without revealing $V$ (i.e., in a protocol without error verification), this does not necessarily imply security in the case where $V$ is made public. 
This discrepancy lies at the heart of the verification problem. 

However, for both mainstream methods of QKD security proofs, namely Renner's approach and Koashi's approach, we show in Secs.~\ref{sec:Solution_Renner} and \ref{sec:solutionkoashi}, respectively, that the verification problem can be resolved. In other words, in both approaches, if secrecy without revealing $V$, i.e., Eq.~(\ref{eq:secrecy_without_verification}), is guaranteed, then secrecy with the announcement of $V$, namely Eq.~(\ref{eq:Secrecy_with_verification}), can also be derived.

For simplicity of presentation, we will describe the case without smoothing. However, the same principles apply straightforwardly when smoothing is included.

\subsection{Solution in Renner's approach}
\label{sec:Solution_Renner}

When employing the Renner's approach for the security proof, Tomamichel and Leverrier have resolved the verification problem in~\cite{Tomamichel2017largelyself}.

\subsubsection{Setups and claims}
\label{sec:Renner_setting_claim}

We begin by explaining the setups.

First, in this section, we limit ourselves to the following type of error verification method: After bit error correction, Alice and Bob publicly announce classical information $H$, which is determined from their reconciled keys $A$ and $B$. Next, either Alice or Bob decides whether to continue or abort, disclosing the decision variable $V\in\{0,1\}$, based on $H$ and her (or his) reconciled key ($A$ or $B$). In other words, the public information $H$ and $V$ can be expressed by some functions $f$ and $g$ as $H=f(A,B)$ and $V=g(H,A)$ or $V=g(H, B)$.

Second, in this paper, by "Renner's approach" \cite{renner2006securityquantumkeydistribution} we always refer to the situation where (i) the protocol employs a universal$_2$ hash function (or more generally, the almost dual universal$_2$ function~\cite{TH2011,HT2013}) for privacy amplification, and (ii) the leftover hashing lemma (LHL) is used to prove the secrecy of the final key.

Under these setups, if the secrecy without revealing $V$, as in Eq.~(\ref{eq:secrecy_without_verification}), has been proven, then the secrecy with $V$ revealed, as in Eq.~(\ref{eq:Secrecy_with_verification}), automatically holds. 
In other words, among the two secrecy conditions [Eqs.~(\ref{eq:secrecy_without_verification}) and (\ref{eq:Secrecy_with_verification})], it suffices to prove only one of them.

\subsubsection{Mathematical details}
\label{sec:Renner_mathematical_detail}

Recall that, within Renner's approach, to prove the secrecy condition in Eq.~(\ref{eq:secrecy_without_verification}) without revealing $V$ as considered in Sec.~\ref{sec:usual_separation}, one usually discusses as follows. After bit error correction, only the public information $H$ is revealed, while $V$ remains hidden, and we consider the state $\rho_{ABEH}$. Based on the data obtained in the parameter estimation phase, one then proves that the conditional min-entropy satisfies 
\begin{equation}
    H_{\rm min}(A|EH)_\rho\ge \ell +2\log(1/\varepsilon_{\rm sec})
    \label{eq:H_min_lower_bound}.
\end{equation}
The secret key $K_A$ is obtained by applying privacy amplification to the reconciled key $A$. The state $\rho_{K_AEH}$ of this secret key can then be shown, by the LHL together with Eq.~(\ref{eq:H_min_lower_bound}), to satisfy 
\begin{equation}
    d\left(\rho_{K_AEH}|EH\right)
    \le 2^{\frac12\left(\ell-H_\text{min}(A|EH)_{\rho}\right)}.
    \label{eq:LHL_without_ver}
\end{equation}
Thus, Eq.~(\ref{eq:secrecy_without_verification}) is establied.

Next, we evaluate the secrecy condition in Eq.~(\ref{eq:Secrecy_with_verification}) when $V$ is revealed. This corresponds, by definition, to deriving an upper bound on $d(\rho_{K_AEH}^{V=0}|EH)$.

To this end, let us first note the following. After the state $\rho_{ABEH}$ is generated as described two paragraphs earlier, Alice and Bob compute and reveal $V\in\{0,1\}$, and denote the resulting state by $\rho_{ABEHV}$. In this case, the following statements hold. 
\begin{itemize}
\item 
If Alice applies privacy amplification to the reconciled key $A$ of $\rho_{ABEH}=\text{tr}_V\left(\rho_{ABEHV}\right)$, one can reproduce the sub-normalized state $\rho_{K_AEH}$, which is employed in the evaluation of the secrecy considered without revealing $V$, in Eq.~(\ref{eq:secrecy_without_verification}).
\item 
Consider the sub-normalized state $\rho_{ABEH}^{V=0}$ obtained by projecting $\rho_{ABEHV}$ onto the case $V=0$, i.e., $\rho_{ABEH}^{V=0}=\text{tr}_V\left(\rho_{ABEHV}[0]_V\right)$. 
If one then applies privacy amplification to the reconciled key $A$, the resulting state coincides with the sub-normalized state $\rho_{K_AEH}^{V=0}$ used in the evaluation of the secrecy condition in Eq.~(\ref{eq:Secrecy_with_verification}). 
\end{itemize}

In summary, to evaluate the secrecy condition without revealing $V$, it suffices to apply the LHL using the conditional min-entropy $H_\text{min}(A|EH)_\rho$ of $\rho_{ABEH}$. On the other hand, to evaluate the secrecy condition when $V$ is revealed, one should use the conditional min-entropy $H_\text{min}(A|EH)_{\rho^{V=0}}$ of $\rho_{ABEH}^{V=0}$. It is known that the following relation holds between these two conditional min-entropies. 

\begin{lemma}[Ref. \cite{Tomamichel2017largelyself}, Lemma 10]
\label{lmm:H_min_non_decreasing}
The conditional min-entropy of (possibly sub-normalized) sate $\rho$ does not decrease when marginalized by the condition $V=0$, i.e.,
\begin{equation}
        H_{\rm min}(A|EH)_{\rho}\le H_{\rm min}(A|EH)_{\rho^{V=0}}.
        \label{eq:H_min_with_ver}
\end{equation}
\end{lemma}

Thanks to this lemma, as long as Eq.~(\ref{eq:H_min_lower_bound}) holds, 
\begin{equation}
H_{\rm min}(A|EH)_{\rho^{V=0}}\ge \ell +2\log(1/\varepsilon_{\rm sec})   
\end{equation}
is satisfied. By applying the LHL to $\rho^{V=0}$, we obtain
\begin{eqnarray}
    d\left(\rho_{K_AEH}^{V=0}|EH\right)&\le&2^{\frac12\left(\ell-H_\text{min}(A|EH)_{\rho^{V=0}}\right)},
\end{eqnarray}
which shows that Eq.~(\ref{eq:Secrecy_with_verification}) is fulfilled.

\subsubsection{Proof of Lemma \ref{lmm:H_min_non_decreasing}}

The proof of Lemma~\ref{lmm:H_min_non_decreasing} is given in Ref.~\cite{Tomamichel2017largelyself}, but for the reader's convenience, we provide it here. 

Since the projection $[0]_V$ on the space $V$ is a quantum operation, 
\[
[0]_V\rho_{ABEHV}[0]_V\le \rho_{ABEHV}
\]
holds. By tracing out subsystems $B$ and $V$, we obtain
\[
\rho_{AEH}^{V=0}\le \rho_{AEH}.
\]
By combining this with the definition of the conditional min-entropy, we have Eq.~(\ref{eq:H_min_with_ver}).

\if0
\begin{lemma}
\label{ref:lemma3}
Suppose that the conditional min-entropy of the (possibly sub-normalized) classical-quantum state $\rho$ of Alice's reconciled key and Eve's system satisfies
\begin{equation}
H_{\rm min}(A|E)_\rho\ge \ell +2\log(1/\varepsilon_{\rm sec})
\label{eq:H_minAE_lower_bound}
\end{equation}
with $\varepsilon_{\rm sec},\ell>0$.
Suppose also that, in order to generate secret keys, Alice and Bob perform
\begin{itemize}
\item error verification, and she newly announces the hash value information $H$,
\item privacy amplification on their reconciled keys with the output bit length $\ell-|H|-1$, where $|H|$ denotes the bit length of $H$.
\end{itemize}
Alice's secret key $K_A'$ thus generated (of $\ell-|H|-1$ bits)  is $\varepsilon_{\rm sec}$-secret; i.e., it holds that
\begin{equation}
d(\rho_{K'_AEHF}^{V=0}|EHF)\le\varepsilon_{\rm sec},
\label{eq:assumptionLemma3}
\end{equation}
where $F$ denotes the choice of a hash function used for privacy amplification.
\end{lemma}

The meaning of this lemma can also be explained as follows.
As one can easily imagine, by a straightforward application of the chain rule for the conditional min-entropy, it can readily be shown that $d(\rho_{K_AEHF}|EHF)\le\varepsilon_{\rm sec}$ holds for $(\ell-|H|)$-bit key $K_A$.
This inequality, however, turned out to be inappropriate in the previous section for guaranteeing the security of the protocols with error verification. The above lemma claims that there is still a simple method for ensuring the correct inequality, $d(\rho_{K'_AEHF}^{V=0}|EHF)\le\varepsilon_{\rm sec}$, by shortening the secret keys by only one bits.

\begin{proof}
    Recall that in the actual protocol, Alice and Bob generate the $(\ell-|H|-1)$-bit secret key (by applying privacy amplification with just one extra bit compared to the case without error verification) if $V=0$, and they abort the protocol if $V=1$. 

In order to prove 
\begin{equation}
d(\rho_{K'_AEHF}^{V=0}|EHF)=
\left\|\rho_{K'_AEHF}^{V=0}-(\rho_{K'_AEHF}^{V=0})^{\rm ideal}\right\|_1\le\varepsilon_{\rm sec},
\label{eq:targetLemma3}
\end{equation}
consider a virtual scenario where Alice and Bob do not abort the protocol even when $v=1$. In this scenario, the state $\bar{\rho}_{K_A'EHVF}$ of systems $K_A'EHVF$ is written as 
\begin{align}
\bar{\rho}_{K_A'EHVF}=&\nonumber\\
\sum_{\substack{k_A'\in\{0,1\}^{\ell-1},\\v\in\{0,1\},h,f}}&[k_A']_{K_A'}\otimes\left(\sum_{a\in f^{-1}(k_A')}\rho_E^{(A,H,V)=(a,h,v)}\right)\nonumber\\
    &\otimes[h]_H\otimes[v]_V\otimes {\rm Pr}[F=f][f]_F.
\end{align}
Here, A denotes Alice's reconciled key. For this state, we have 
\begin{align}
&\left\|\bar{\rho}_{K_A'EHVF}-(\bar{\rho}_{K_A'EHVF})^{\rm ideal}\right\|_1\notag\\
=&\sum_{v\in\{0,1\}}
\left\|\bar{\rho}_{K_A'EHF}^{V=v}-\left(\bar{\rho}_{K_A'EHF}^{V=v}\right)^{\rm ideal}\right\|_1.
\label{eq:rho_ver_separated}
\end{align}
Note that the states for $v=0$ are the same in both the actual and virtual scenarios, i.e., 
\begin{align}
\bar{\rho}^{V=0}_{K_A'EHF}=\rho^{V=0}_{K_A'EHF}.
\label{eq:equalV=0}
\end{align}
If we can prove that
\begin{align}
\left\|\bar{\rho}_{K_A'EHVF}-(\bar{\rho}_{K_A'EHVF})^{\rm ideal}\right\|_1\le\varepsilon_{\rm sec},
\label{eq:trace_dist_virtual}
\end{align}
then we have Eq.~(\ref{eq:targetLemma3}) from Eqs.~(\ref{eq:rho_ver_separated}) and (\ref{eq:equalV=0}) and the non-negativity of trace distance.
Hence, our remaining task is to prove Eq.~(\ref{eq:trace_dist_virtual}). 

The LHS of Eq.~(\ref{eq:trace_dist_virtual}) is the secrecy when Alice always generates $(\ell-|H|-1)$-bit secret key $K_A'$ (i.e., without aborting the protocol) and Eve possesses systems $EHV$. To evaluate this, it suffices to lower-bound the min-entropy $H_{\rm min}(A|EHV)_{\rho}$ as
\begin{eqnarray}
H_{\rm min}(A|EHV)_{\rho}
&\ge&H_{\rm min}(A|E)_{\rho}-H_{\rm max}(HV)_{\rho}\nonumber\\
&\ge&H_{\rm min}(A|E)_{\rho}-(|H|+1)\nonumber\\
&\ge&\ell+2\log(1/\varepsilon_{\rm sec})-(|H|+1).
\label{eq:H_min_evaluation}
\end{eqnarray}
The first inequality follows by Eq.~(3.21) in~\cite{renner2006securityquantumkeydistribution}, the chain rule of the conditional min-entropy.
The second inequality follows from the fact that the max-entropy~\cite{renner2006securityquantumkeydistribution} is upper-bounded as $H_{\rm max}(HV)_{\rho}\le |H|+1$, where $V$ is a one-bit variable and $|H|$ denotes the bit length of $H$. The last inequality comes from Eq.~(\ref{eq:H_minAE_lower_bound}). Then, the leftover hash lemma~\cite{renner2006securityquantumkeydistribution,5961850,Tomamichel2017largelyself} guarantees that the resulting trace distance can be upper-bounded by $\varepsilon_{\rm sec}$ by performing privacy amplification using the hash function $F$.
\end{proof}
\fi

\subsection{Solution in Koashi's approach}
\label{sec:solutionkoashi}

In security proofs based on the phase-error-correction method (the PEC-based approach, also known as Koashi's approach~\cite{koashi2005simple,Koashi_2009}), no general solution to this verification problem has been known. However, in this section, we provide such a solution.



\subsubsection{Summary of Koashi's approach and the challenge of addressing the verification problem within this framework}
\label{sec:summarykoashidifficulty}

Recall that security proofs in Koashi's approach usually proceed as follows (see, e.g., Refs. \cite{Tsurumaru_2020,9606732}).
\begin{enumerate}
\item[i)] Define a virtual pure state $\ket{\rho}_{ABE}$ which equals $\rho_{AE}$ when subsystem $A$ is measured in the bit basis (usually chosen to be the $Z$ basis) and $B$ is traced out \footnote{
Note that subsystem $B$ does not only consist of Bob's system, but also includes any system accessible to Alice. For example, it includes subsystems other than the qubits possessed by Alice, such as the shield system (i.e., the system that purifies Alice's emitted states).
}.
\item[ii)] 
Let $\rho_{X^AB}$ be the {\it virtual} state, obtained by measuring subsystem $A$ of $\ket{\rho}_{ABE}$ in the phase basis (usually chosen to be the $X$ basis) and tracing out $E$.
Upper-bound the conditional max-entropy $H_\text{max}(X^A|B)_\rho$, using the data obtained in the parameter estimation phase.
\item[iii)]
Suppose that one performs error correction on subsystem $X^A$ in the phase basis, using $B$ as side information.
Use $H_\text{max}(X^A|B)_\rho$ to obtain an upper bound $Q^\text{EC}$ on the failure probability of the above phase error correction \footnote{
For example, Ref. \cite{Koenig2009} states that $Q^\text{EC}\le 2^{H_\text{max}(X^A|B)_\rho-(n-\ell)}$.
}.
Then the secrecy of Alice's secret key $K_A$ can be given as $d(\rho_{K_AE}|E)\le2\sqrt2\sqrt{Q^\text{EC}}$; see, e.g., Refs. \cite{Hayashi_2012,Tsurumaru_2020,9606732}.
In other words, the secrecy parameter can be bounded as $\varepsilon_\text{sec}\le2\sqrt2\sqrt{Q^\text{EC}}$.
\end{enumerate}

As described above, in Koashi's approach, the target state of the security proof is not the state $\rho_{ABE}$ corresponding to the actual QKD protocol, but rather the virtual state $\rho_{X^AB}$ that is mathematically defined from $\rho_{ABE}$. In this framework, Alice's sifted key $A$ is not defined; instead, the state obtained after measuring in the phase basis is considered. 
Consequently, the description of the public information $H$ and $V$ is not straightforward (in contrast, in Renner's approach discussed in the previous section~\ref{sec:Solution_Renner}, the reconciled keys $A$ and $B$ appear explicitly as classical variables in the state $\rho_{ABE}$ after error correction but before the calculation of $H$ and $V$, so that the state $\rho_{ABEHV}$ including $H$ and $V$ can be described straightforwardly). This has made it difficult to address the verification problem in Koashi's approach. For example, within this approach it is not clear whether a lemma analogous to Lemma~\ref{lmm:H_min_non_decreasing} exists. For these reasons, when adopting a security proof based on Koashi's approach, no general solution to the verification problem has been known.

\subsubsection{Proposed solution}
\label{sec:proposedsolution}

The idea of the solution is to exploit the fact that Step iii) in the previous section~\ref{sec:summarykoashidifficulty} is equivalent to the LHL in Renner's approach~\cite{Tsurumaru_2020,9606732}. Using this equivalence, we translate the situation of Step~iii) into Renner's approach, and then apply the solution described in the previous section~\ref{sec:Solution_Renner} for Renner's approach.

We begin by stating the conclusion, and the mathematical details will be given in the next section.

Our conclusion is the following: Suppose Koashi's approach is applied to a protocol without aborting due to error verification, in the same sense as in Sec.~\ref{sec:usual_separation}. Further assume that in Step ii) we obtain the following upper bound on the conidtional max-entropy
\begin{equation}
H_{\rm max}(X^A|B)_\rho\le H_\text{max}^\text{th},
\label{eq:H_max_upper_bound}
\end{equation}
where $H_\text{max}^\text{th}$ is a constant once the parameter estimation phase is completed. If we then add the error-verification procedure of Sec.~\ref{sec:Renner_setting_claim} to the protocol, the secrecy condition with aborting due to error verification: 
\begin{equation}
    d(\rho_{K_AEH}^{V=0})\le2^{\frac12\left(\ell-n+H_\text{max}^\text{th}+|H|\right)}
    \label{eq:upperbound_d_Koashi}
\end{equation}
holds. Here, $|H|$ denotes the bit length of $H$.

\subsubsection{Mathematical details}

Once Eq.~(\ref{eq:H_max_upper_bound}) holds, by an entropic uncertainty relation~\cite{PhysRevLett.106.110506}, we obtain a lower bound on the min-entropy $H_{\min}(A|E)\ge n-H_\text{max}^\text{th}$, and 
by the chain rule of the conditional min-entropy, Eq.~(3.21) in~\cite{renner2006securityquantumkeydistribution}, we have
\begin{equation}
H_{\min}(A|EH)\ge n-H_\text{max}^\text{th}-|H|.  
\end{equation}
This lower bound can be identified with Eq.~(\ref{eq:H_min_lower_bound}), by letting $\ell=n-H_\text{max}^\text{th}-|H|-2\log(1/\varepsilon_\text{sec})$. 
With this, we have completed the translation of Koashi's approach into Renner's approach in Sec.~\ref{sec:Renner_mathematical_detail}. By applying a solution analogous to that in Sec.~\ref{sec:Renner_mathematical_detail}, Eq.~(\ref{eq:upperbound_d_Koashi}) can then be established.

\ 

\section{Discussion\label{sec:discussion}}
\if0
In our proof of Lemma~\ref{ref:lemma3}, we employed a method based on the leftover hash lemma (referred to as Renner's approach~\cite{renner2006securityquantumkeydistribution} or the LHL-based approach).
We note, however, that the result of Lemma~\ref{ref:lemma3} can also be used in security proofs based on the phase error correction method (referred to as the PEC-based approach, also known as Koashi's approach~\cite{koashi2005simple,Koashi_2009}).
This is because the LHL-based and the PEC-based approaches have been proven to be equivalent~\cite{Tsurumaru_2020,9606732}.

More precisely, the security proof based on Koashi's approach can be interpreted as (i) considering a virtual protocol where Alice prepares qubits entangled with the systems sent to Bob, and then measures these qubits in the $X$-basis (the basis complementary to the key generation basis), yielding outcomes denoted by $X^A$, and then (ii) upper-bounding the max-entropy
$H_{\rm max}(X^A|B)$~\cite{Tsurumaru_2020,9606732}. By applying an entropic uncertainty relation~\cite{PhysRevLett.106.110506} to $H_{\rm max}(X^A|B)$, one obtains a lower bound on the min-entropy $H_{\min}(A|E)$. If this lower bound is identified with Eq.~(\ref{eq:H_minAE_lower_bound}), then Lemma~\ref{ref:lemma3} can be proven accordingly.
\fi

The verification problem identified in this paper originates from the fact that the verification's outcome $V$ can, in general, be correlated with the sifted, reconciled or final keys. On the other hand, it should be noted that if one can somehow prove that $V$ is uncorrelated with the keys, then this issue does not arise. As already discussed in Sec.~\ref{sec:usual_separation}, such a situation occurs, for example, when the decisions to continue or abort the protocol are made solely based on the public information.

We also note that there is another typical situation where $V$ can be shown uncorrelated with the keys.
That is where one can apply a Shor–Preskill-type security proof~\cite{PhysRevLett.85.441}, and thus regard the error verification step as part of the syndrome measurement for bit error correction (or, equivalently, if it is incorporated into the choice of a sufficiently large code $C_1$ for $Z$-basis error correction).
This is true, for example, when Alice and Bob can be assumed to possess qubits in the virtual protocol (as in the PEC-based approach) and perform error verification using a linear hash function~\cite{footnote4}.
In such cases, the secrecy of Alice's (or Bob's) final key can be discussed independently of error verification, and thus the verification problem no longer occurs~\cite{footnote5}.

\section*{Acknowledgements}
We thank Kiyoshi Tamaki, Go Kato and Shun Kawakami for helpful discussions. 
Also, we are deeply grateful to the anonymous referee for pointing out that, in Renner's approach, the situation where the outcome of error verification is disclosed to Eve has already been incorporated into the security proof in Ref.~\cite{Tomamichel2017largelyself}.
A. Mizutani is partially supported by JSPS KAKENHI Grant Number JP24K16977.

\appendix

\section{Practical method for bounding $\Pr[K_A\ne K_B]$}
\label{sec:appendix1}
There is a practical method for bounding the probability $\Pr[K_A\ne K_B]$ appearing, e.g., in Eqs.~(\ref{eq:correctness_without_verification}) and (\ref{eq:separation_general_case})~\cite{Tomamichel2017largelyself}.
This is the probability of an undesirable event in which the secret keys do not match despite the error verification being successful. This probability can be upper-bounded as
\begin{align}
&\Pr[K_A\ne K_B]=\Pr[K_A\ne K_B\land V=0]\notag
\\
&\le\Pr[A\ne B\land V=0]=\Pr[V=0\,|\, A\ne B]\Pr[A\ne B]
\notag
\\
&\le\Pr[V=0\,|\, A\ne B].
\end{align}
Here, $A$ and $B$ denote Alice's and Bob's reconciled keys, respectively. The quantity on the last line (and thus also $\Pr[K_A\ne K_B]$) can be upper-bounded by $\varepsilon_{\rm cor}$ as follows.
Suppose that Alice announces the hash value $h(a)$ of her reconciled key $a$, using a randomly chosen element $h$ of the universal hash function $H$ with the output length $\lceil\log(1/\varepsilon_{\rm cor})\rceil$.
Also, suppose that Bob announces that the protocol is aborted ($v=1$) if and only if the hash values of the reconciled keys differ, i.e., $h(a)\ne h(b)$. Then, we have 
\begin{equation}
\Pr[V=0\,|\, A\ne B]=\Pr[H(A)=H(B)\,|\, A\ne B]\le \varepsilon_{\rm cor}.
\label{eq:correctness_verification}
\end{equation}

\end{document}